\title{An Introduction to Time-Constrained Automata}
\author{Matthieu Lemerre, Vincent David and Christophe Aussaguès 
\institute{CEA, LIST, Embedded Real-Time System Lab\\
91191 Gif-sur-Yvette Cedex, France}
\email{Firstname.Lastname@cea.fr}
\and Guy Vidal-Naquet
\institute{SUPELEC\\
91192 Gif-sur-Yvette Cedex, France}
\email{Guy.Vidal-Naquet@supelec.fr}}
\date{}
\begin{document}

\newenvironment{todo}{\hrule{\bf ptodo}}{\hrule}

\newcommand{\mytodo}[1]{\textcolor{red}{#1}}
\newcommand{\xxx}[1]{\textcolor{red}{#1}}
\newcommand{\mayreduce}[1]{}
\newenvironment{unclean}{\hrule{\bf unclean}\color{red}}{\hrule}
\newenvironment{obsolete}{\comment}{\endcomment}

\newenvironment{reduce}{\comment}{\endcomment}
\newenvironment{noreduce}{}{}

\newcommand{\paragraphR}[1]{\paragraph{#1}}

\theoremstyle{definition} \newtheorem{definition}{Definition}
\theoremstyle{definition} \newtheorem{example}[definition]{Example}
\theoremstyle{plain} \newtheorem{theorem}[definition]{Theorem}
\theoremstyle{plain} \newtheorem{lemma}[definition]{Lemma}
\theoremstyle{plain} \newtheorem{corollary}[definition]{Corollary}

\colorlet{a}{red!50}
\colorlet{b}{green!60!blue!60}
\colorlet{c}{blue!40}
\colorlet{d}{yellow!80}

\newcommand{\psic}{$\Psi{}$C}


\tikzstyle{invisible}=[circle, inner sep=0pt, minimum size=0pt]
\tikzstyle{block}=[->, shorten >=1pt,>=stealth, semithick]
\tikzstyle{noconstraint}=[draw,circle, inner sep=4pt, semithick, fill=black!20]
\tikzstyle{after}=[draw, inner sep=3pt, shape=isosceles triangle, thick, draw=blue!50, fill=blue!20]
\tikzstyle{before}=[draw, inner sep=3pt, shape=isosceles triangle,shape border rotate=180, thick, draw=blue!50, fill=blue!20]
\tikzstyle{synchronisationpoint}=[draw,diamond, inner sep = 4pt, thick, draw=blue!50, fill=blue!20]

\tikzstyle{anyconstraint}=[draw,rectangle]


\maketitle{}

\begin{abstract}
  We present time-constrained automata (TCA), a model for hard
  real-time computation in which agents behaviors are modeled by
  automata and constrained by time intervals.

  TCA actions can have multiple start time and deadlines, can be
  aperiodic, and are selected dynamically following a graph, the
  time-constrained automaton. This allows expressing much more precise
  time constraints than classical periodic or sporadic model, while
  preserving the ease of scheduling and analysis.

  We provide some properties of this model as well as their scheduling
  semantics. We show that TCA can be automatically derived from
  source-code, and optimally scheduled on single processors using a
  variant of EDF. We explain how time constraints can be used to
  guarantee communication determinism by construction, and to study
  when possible agent interactions happen.
\end{abstract}

\section{Introduction}

Most concrete implementations of real-time systems use only two
different kinds of tasks: \emph{periodic tasks} and \emph{sporadic
  tasks}. Periodic tasks must execute one job per period of time, and
are meant for regular processing. Sporadic tasks are meant for
processing of events of limited occurrence, with jobs having a
deadline relative to the arrival of the event. We believe that these
kinds of tasks are not expressive enough in many situations.
Restricting real-time systems to use only them puts heavy constraints
on the design of real-time applications, which make them harder to
design, implement and analyze.

For instance, the timing behavior of complex tasks can be specified
using timed automaton\cite{alur94theory}, whose behavior is not
cyclic. These kind of tasks fit neither the periodic nor the sporadic
task model, making the translation to these tasks inefficient. Common
example of complex timing behaviors are degraded mode, multi-phase
applications (e.g. take-off/flight/landing phases of air
travel\ldots). An issue that motivates the need for more accurate task
models is \emph{jitter}, which is the variation between successive
executions of a periodic task. Current real-time methodologies consist
of analyzing jitter once the design is done and the execution times of
the tasks are known. Thus, the whole design has to be modified if the
jitter of a task is too high. It is also impossible to take into
account the fact that different tasks have different degrees of
sensitivity to jitter. On the contrary, the time-constrained task
model we present allows to express the maximum jitter directly in the
model, making it a constraint that the scheduling algorithm has to
enforce. Moreover the maximum jitter can appear in the specification,
design and code of each task, and bounding of jitter is thus
guaranteed by construction.

We claim that using a more expressive task model allows an easier
development (less transformations need to be done from specification
to the implementation) and better verification, thus increasing the
safety of the system. This is why OASIS \cite{chabrol05deterministic},
a toolchain implementing a subset of the time-constrained task model,
is used in hard real-time safety-critical environments, such as the
nuclear industry \cite{david04oasis}. This toolchain comprises in
particular a specific compiler, a microkernel and operating system
services, whose behaviors are functionally described in this article.
We also claim that using this model allows to reduce hardware costs,
because accurate modeling of task needs and exact feasibility analysis
allows to achieve a very high utilization, on single and multiple
processors.

The purpose of this paper is to present formal semantics of the
time-constrained automata model, and proof of some important results
(optimality of EDF, communication determinism) that come from using
this model. The paper is structured as follows:
Section~\ref{sec:related-work} present related work, and
Section~\ref{sec:time-constrained-automata} the time-constrained task
model. Section~\ref{sec:applications} provides example uses and shows
how they can be derived automatically from source code expressed in a
suitable language. Section~\ref{sec:scheduling-time-constrained-tasks}
shows how time-constrained tasks can be scheduled, and gives an
optimal scheduling algorithm on single processors based on EDF.

\section{Related work}
\label{sec:related-work}
\paragraph{Timed automata} Time-constrained automata is a model to be
used for scheduling rather than for accurate model-checking. Notable
restrictions from model-checking models such as timed automata
\cite{alur94theory} is that we abstract the control flow logic by
considering that the choice between several transitions is
nondeterministic. Moreover, it is the choice of a transition that acts
on a time constraint, rather than having time constraints acts on
the control flow logic.

However, time-constrained automata can be modeled using timed
automata, using only two clocks. It is thus a simpler model, i.e. less
expressive, but easier to analyze, than the general timed automata. 

\paragraph{Task models}

The main characteristics of our model is that it allows to express
infinite computations and allows dynamic modification of time
constraints. Usual task models perform dynamic release of fixed jobs
(i.e. the start time, deadline of the execution time does not evolve)
\cite{baruah04schedulingreal-time}. By contrast, a time-constrained
automaton can be viewed as one job that changes dynamically.

Moreover, timing behavior can change depending on choices not
expressed in the automaton: so the task describes a set of possible
sets of timing requirements, rather than a fixed set of timing
requirements.

There have been other attempts to provide more accurate real-time
models with multiple deadlines and start times \cite{baruah98general},
but they are event-triggered and impose cyclic behavior.

Some work on finite computations task models takes into account
dependencies between jobs \cite{muntz70preemptive} or start times and
deadlines \cite{horn74somesimple}, but none had dynamic changes of
timing behavior (i.e. jobs always have one fixed start time and one
fixed deadline).

Finally, some results exist in the literature about scheduling
independent of the task model, but they often assume that jobs do not
change dynamically \cite{baruah04schedulingreal-time}. However, these
results can be adapted to fit our model, as it is done here to the
proof that EDF is optimal \cite{dertouzos74control}.

\paragraph{Language interface to time constraints}

The \psic{} language allows to express tasks in this model
(Section~\ref{sec:writing-time-constrained-programs}) by indicating
time constraints using a language interface, rather than using an
API (like POSIX). This allows the application designer to focus on his
needs rather than on scheduling.

This is similar to synchronous programming languages like Esterel
\cite{berry00foundations} or Lustre \cite{halbwachs91synchronous}, or
time-triggered programming language such as
Giotto~\cite{henzinger01giotto}. The main diffence of our work with
these languages is the underlying task model, which is not restricted
to periodic or sporadic tasks, but rather add timing constraints to
any automata.

\paragraph{Time-triggered architecture and models} The time-triggered 
architecture~\cite{kopetz98time-triggered} focus on separation of a
hard real-time system between interfaces and components. The interface
consist of asynchronous communication at a priori known sampling
dates. The components in this architecture is often periodic or
sporadic.

\section{Time-constrained tasks} 
\label{sec:time-constrained-automata}

The time-constrained task model allows accurate description of the
time constraints of single-threaded time-constrained computations
using graphs called \emph{time-constrained automata}. Concurrent
time-con\-strained computations are represented by multiple
automata.

We present time-constrained automata in three steps: introduction to the
model given for chains and trees before explaining the time-constrained
automata.

\subsection{Chains}

\paragraph{Blocks, arcs, nodes and their relationships}

A \emph{block} is a sequence of instructions taken as a whole (the
term ``block'' comes from the control flow graph terminology). Blocks
are represented by \emph{arcs} and are separated by \emph{nodes}. The
node from which the arc starts \emph{immediately precedes} the arc,
and the node to which it leads \emph{immediately succeeds} the arc.

From the relations \emph{immediately precedes} and \emph{immediately
  succeeds}, we can derive by transitive closure the relations
\emph{precedes} and \emph{succeeds}. It can be said that an arc
succeeds an arc, an arc succeeds a node, a node succeeds an arc, or a
node succeeds a node (resp. precedes).

A \emph{chain} is a sequence of blocks, executing one after the other.
When blocks $a$ and $b$ are consecutive, instructions of $a$ have to
be executed before those of $b$. 

Chains have a \emph{first node}, at which they start, but can be
\emph{infinite}.

\paragraph{Temporal constraints}

As a chain is a sequence of indivisible blocks, time constraints can
apply only to blocks. Only two kind of constraints are possible:

\begin{itemize*}
\item A block can be constrained to start only \emph{after} a certain
  \emph{date} (i.e. specific time instant), or
\item it can be constrained to finish \emph{before} one.
\end{itemize*}

As TCA target hard real-time systems, time constraints are strict,
i.e. failure to meet them is an incorrect behavior. We choose to make
nodes bear the constraints:
\begin{itemize*}
\item ``After constraints'' of a block are borne by the immediate
  predecessor node, denoted by~$\rhd$;
\item ``Before constraints'' of a block are borne by the immediate
  successor node, denoted by~$\lhd$;
\item When a block bears both a before and an after constraint
  \emph{at the same date}, then it becomes a \emph{synchronization
    point}, represented by $\diamondsuit$.
\item Except for synchronization points, nodes cannot bear more than
  one constraint\footnote{Allowing more than one constraint per node
    would not extend the expressive power, because of constraints
    redundancy seen below.}.
\item Nodes with no constraint are represented by {\small $\bigcirc$}.
\end{itemize*}

We label the constraint nodes by the absolute date that they
represent. Figure~\ref{fig:constrained-chaine-possible-schedule} gives
an example.

\begin{figure}[htbp]
  \centering
  
  \begin{tikzpicture}[xscale=1.8]

    \begin{scope}[xscale=0.5]
    \node[after,label=below:1] (1) at (1,0) {};
    \node[after,label=below:2] (2) at (2,0) {};
    \node[before,label=below:5] (5) at (5,0) {};
    \node[synchronisationpoint,label=below:7] (7) at (7,0) {};
    \node[before,label=below:10] (10) at (10,0) {};

    \draw[block] (1) -- node[above] {a} (2);
    \draw[block] (2) -- node[above] {b} (5);
    \draw[block] (5) -- node[above] {c} (7);
    \draw[block] (7) -- node[above] {d} (10);

    \foreach \x in {1,2,5,7,10} 
    \draw[dashed, help lines] (\x, 1) -- (\x, -2);

    \begin{scope}[shift={(0,-1.5)},yscale=0.5,fill=gray!80]
      \draw[fill=black!10] (1, 0) rectangle (2.3, 1);
      \draw[fill=black!25] (2.3, 0) rectangle (3.5, 1);
      \draw[fill=black!25] (4.1, 0) rectangle (4.8, 1);
      \draw[fill=black!34] (4.8, 0) rectangle (6.4, 1);
      \draw[fill=black!45] (7.3, 0) rectangle (9.1, 1);

      \node at (1.65, 0.5) {a};
      \node at (2.9, 0.5) {b};
      \node at (4.45, 0.5) {b};
      \node at (5.6, 0.5) {c};
      \node at (8.2, 0.5) {d};
    \end{scope}
  \end{scope}
  
  \end{tikzpicture}

  \caption{Constrained chain: $a$ must start after date 1, $b$ must
    execute between 2 and 5, $c$ must end before 7, and $d$ must
    execute between 7 and 10. Beneath is a possible corresponding
    preemptive schedule. }
  \label{fig:constrained-chaine-possible-schedule}
\end{figure}
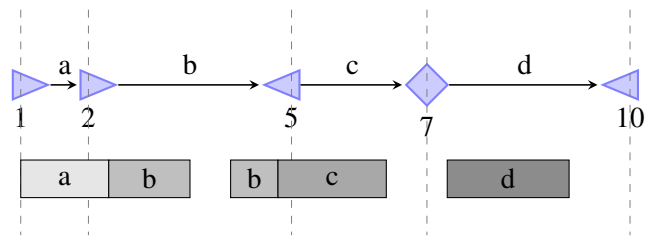

\paragraph{Extension of constraints to other arcs }

An \emph{after} node implicitly constrains all the succeeding blocks
to start after its date, and a \emph{before} node constrains all the
preceding blocks to end before its date. This is formally stated by
the following lemma:

\begin{lemma}[Implicit extension of constraints to other arcs]
  \label{th:implicit-extension-constraints-all-arcs}
  If a block $b$ succeeds an after node $A$ of date $d$, then $b$ must
  start after $d$.

  If a block $b$ precedes a before node $B$ of date $d'$, then $b$ must
  end before $d'$.
\end{lemma}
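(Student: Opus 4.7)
The plan is to prove both statements by induction on the distance (in the chain) between the constraint node and the block $b$, relying on the definition of a chain as a sequence of blocks that execute one after the other. Since the two cases are symmetric, I would present the \emph{after} case in detail and indicate that the \emph{before} case follows by a dual argument.

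For the after case, I would induct on the number $k$ of intermediate blocks between $A$ and $b$ in the chain. In the base case, $k=0$, so $b$ immediately succeeds $A$, and the conclusion is exactly the definition of the after constraint: the block whose immediate predecessor is $A$ must start after $d$. For the inductive step, assume $b$ is preceded in the chain by a block $b'$ (which itself succeeds $A$ with $k-1$ intermediate blocks) sharing an intermediate node with $b$. By the induction hypothesis, $b'$ starts after $d$. Since the blocks of a chain execute in order, $b$ can only begin once $b'$ has finished; in particular $b$ starts strictly later than $b'$ starts, hence after $d$.

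The before case is handled symmetrically: induct on the number of blocks between $b$ and $B$ along the chain. The base case is the definition of the before constraint borne by the immediate successor of~$b$. For the inductive step, if $b$ precedes a block $b'$ which itself precedes $B$ with fewer intermediate blocks, then by induction $b'$ ends before $d'$; because $b$ must end before $b'$ can start, and $b'$ starts no later than it ends, $b$ ends before $d'$.

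I do not expect a genuine obstacle here: the statement is essentially a transitive-closure unfolding of the two local definitions, and the only ingredient beyond those definitions is the ordering property of chains (``when blocks $a$ and $b$ are consecutive, instructions of $a$ have to be executed before those of $b$''). The one point to be careful about is distinguishing ``start'' from ``end'' when chaining inequalities through the intermediate blocks, which is why I prefer the explicit induction above to a one-line transitive-closure appeal.
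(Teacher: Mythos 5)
Your proof is correct and follows essentially the same route as the paper: an induction (the paper says ``recurrence'') on the distance along the chain from the constraint node, with the base case given directly by the definition of the borne constraint and the inductive step using the sequential execution of consecutive blocks, the \emph{before} case being treated symmetrically. Your version is in fact slightly more careful than the paper's, which glosses over the start/end bookkeeping you make explicit.
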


\begin{proof} The lemma is proved by recurrence:
  
  Either $b$ immediately succeeds $A$, then by definition $b$ must
  start after $d$.

  Else assume the lemma is true for the block $c$ at distance $n$ from
  $A$, i.e. $c$ must start after date $d$; then the node at distance
  $n+1$ from $A$ (if any) immediately succeeds block $c$, so executes
  after it, and must start after $d$.

  The proof is similar for the \emph{before} node.
\end{proof}

From the graphical representation, one can easily derive the implicit
constraints on a block. For instance in
Figure~\ref{fig:constrained-chaine-possible-schedule}, $a$ must end
before 5, because the ``5'' before constraint succeeds $a$.

Thus time-constrained automata are a model based on \emph{possible
  intervals of computations}, i.e. each block has a particular
interval during which it can execute. By contrast, the synchronous
model is based on single \emph{points} of computation.

\paragraph{Impossible and redundant constraints}

There are also some relationships between the dates of the different
constraints: some cannot be satisfied, others can be simplified
because they are implied by another one.

Figure~\ref{fig:nodes-succeeding-later-node} shows the different cases
where a constraint is followed by a constraint with an earlier date.

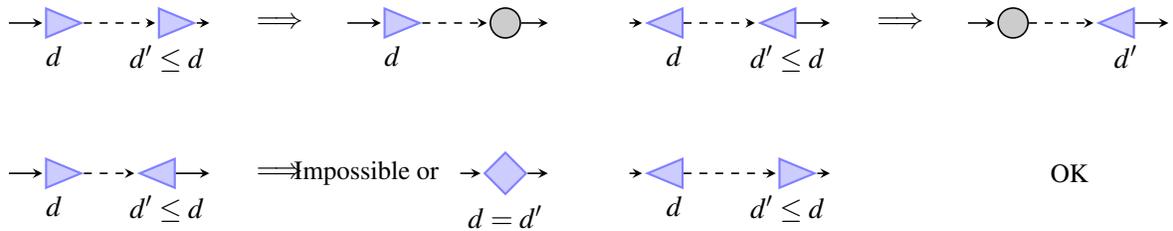
\begin{figure}[htbp]
  \centering
  
  \begin{tikzpicture}[xscale=1.5]

    \begin{scope}
    \node[after,label=below:$d$] (a) at (0,0) {};      
    \node[after,label=below:$d'\le d$] (b) at (1,0) {};      

    \node at (2,0) {$\Longrightarrow$};

    \node[after,label=below:$d$] (a') at (3,0) {};      
    \node[noconstraint] (b') at (4,0) {};      

    \draw[block] (a) +(-0.4, 0) -- (a);
    \draw[block,dashed] (a) -- (b);
    \draw[block] (b) -- + (0.4, 0);

    \draw[block] (a') +(-0.4, 0) -- (a');
    \draw[block,dashed] (a') -- (b');
    \draw[block] (b') -- + (0.4, 0);
    \end{scope}

    \begin{scope}[yshift=-2cm]
    \node[after,label=below:$d$] (a) at (0,0) {};      
    \node[before,label=below:$d'\le d$] (b) at (1,0) {};

    \node at (2,0) {$\Longrightarrow$};
    \node[left] at (3.5,0) {{\small Impossible or }};
    \node[synchronisationpoint,label={below:$d=d'$}] (a') at (4,0) {};

    \draw[block] (a) +(-0.4, 0) -- (a);
    \draw[block,dashed] (a) -- (b);
    \draw[block] (b) -- + (0.4, 0);

    \draw[block] (a') +(-0.4, 0) -- (a');
    \draw[block] (a') -- + (0.4, 0);

    \end{scope}

    \begin{scope}[yshift=0cm, xshift= 5.5cm]
    \node[before,label=below:$d$] (a) at (0,0) {};      
    \node[before,label=below:$d'\le d$] (b) at (1,0) {};      

    \node at (2,0) {$\Longrightarrow$};

    \node[noconstraint] (a') at (3,0) {};      
    \node[before,label=below:$d'$] (b') at (4,0) {};      

    \draw[block] (a) +(-0.4, 0) -- (a);
    \draw[block,dashed] (a) -- (b);
    \draw[block] (b) -- + (0.4, 0);

    \draw[block] (a') +(-0.4, 0) -- (a');
    \draw[block,dashed] (a') -- (b');
    \draw[block] (b') -- + (0.4, 0);
    \end{scope}

    \begin{scope}[yshift=-2cm, xshift =5.5cm]
    \node[before,label=below:$d$] (a) at (0,0) {};      
    \node[after,label=below:$d'\le d$] (b) at (1,0) {};      

    \node at (3.5,0) {{\small OK}}; 

    \draw[block] (a) +(-0.4, 0) -- (a);
    \draw[block,dashed] (a) -- (b);
    \draw[block] (b) -- + (0.4, 0);
    \end{scope}

  \end{tikzpicture}

  \caption{Possible simplifications when a node precedes a node with
    earlier date.}

  \label{fig:nodes-succeeding-later-node}
\end{figure}

\begin{theorem} \label{th:useless-undoable-constraints}
  When a node $N$ of date $d$ precedes a node $N'$ of date $d' \le d$,
  then
  \begin{itemize*}
  \item If $N$ and $N'$ are after nodes, then $N'$ can be removed.
  \item If $N$ and $N'$ are before nodes, then $N$ can be removed.
  \item If $N$ is an after node and $N'$ a before node, then either
    both can be merged into a synchronization point or the constraints
    are impossible to fulfill.
   \end{itemize*}
\end{theorem}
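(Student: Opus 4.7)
The plan is to handle the three cases separately, with Lemma~\ref{th:implicit-extension-constraints-all-arcs} (implicit extension of constraints) doing essentially all of the work in each case, together with the transitivity of the \emph{precedes} and \emph{succeeds} relations. In every case the argument consists of comparing, for the arcs lying before/after/between $N$ and $N'$, the constraint inherited from $N$ with the constraint inherited from $N'$.

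For the first case (two after nodes), I would observe that any arc succeeding $N'$ also succeeds $N$ by transitivity, so by the lemma it must start after $d$; since $d' \le d$, this already implies that it starts after $d'$. The arcs between $N$ and $N'$ (if any) succeed $N$ but not $N'$, so $N'$ does not constrain them at all. Hence removing $N'$ loses no constraint. The second case (two before nodes) is handled dually: an arc preceding $N$ also precedes $N'$, so by the lemma it must end before $d' \le d$, making $N$'s constraint redundant.

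For the third case ($N$ after with date $d$, $N'$ before with date $d' \le d$), I would take any block $b$ lying strictly between $N$ and $N'$. By the lemma $b$ must start after $d$ and finish before $d'$; since a block cannot finish before it starts and $d' \le d$, this is impossible. The only remaining situation is that $N$ and $N'$ are immediate neighbours with no arc between them, in which case the after/before constraints concern disjoint sets of arcs (those succeeding $N'$ and those preceding $N$ respectively); then if $d = d'$ the two nodes carry exactly the constraints attached to a synchronization point at that date and can be merged, while if $d' < d$ every schedule fails to meet them both.

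I do not anticipate any real obstacle: once the preceding lemma is available, each item reduces to a one-line comparison of dates. The only subtlety worth spelling out is the corner case in the third item where no block sits between $N$ and $N'$, which is what rescues the situation and produces the synchronization point rather than an outright contradiction.
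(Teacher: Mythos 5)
Your first two cases are correct and essentially identical to the paper's argument: both reduce the redundancy of $N'$ (resp.\ $N$) to Lemma~\ref{th:implicit-extension-constraints-all-arcs}, the paper phrasing it via the block immediately succeeding $N'$ (resp.\ preceding $N$) and you via transitivity of \emph{succeeds}; these are interchangeable.

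The third case, however, contains a genuine gap. You declare that \emph{any} block $b$ strictly between $N$ and $N'$ renders the constraints unsatisfiable, and you reserve the merge-into-a-synchronization-point outcome for the situation where ``$N$ and $N'$ are immediate neighbours with no arc between them.'' That situation does not exist in this model: \emph{precedes} is the transitive closure of the node--arc adjacency relations, so one node precedes another only through at least one intervening block. Consequently your proof never actually reaches the merge branch of the statement --- you have in effect argued that an after node followed by a before node with $d' \le d$ is always infeasible, which contradicts the theorem you are proving. What rescues the merge case is not adjacency but \emph{empty} blocks: when $d = d'$, a block between $N$ and $N'$ must start and end exactly at date $d$, i.e.\ execute in zero time, which is impossible if it contains at least one instruction but perfectly fine if it is empty. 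The paper's proof splits on exactly this point: if $d' < d$ (or $d = d'$ with a nonempty intervening block) the constraints cannot be fulfilled, while if $d = d'$ and all intervening blocks are empty, then $N$, $N'$ and the empty blocks between them are collapsed into a single synchronization node of date $d$, with Lemma~\ref{th:implicit-extension-constraints-all-arcs} guaranteeing that the constraints on all blocks preceding $N$ and succeeding $N'$ are preserved. Your step ``a block cannot finish before it starts'' only yields a contradiction when $d' < d$; at $d = d'$ it merely forces zero duration, and handling that degenerate duration correctly is precisely what produces the synchronization point.
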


\begin{proof}

  \begin{itemize}
  \item If $N$ and $N'$ are both after constraints: then by
    Lemma~\ref{th:implicit-extension-constraints-all-arcs}, the block
    immediately succeeding $N'$ is already implicitly constrained to
    start after $d \ge d'$. So the constraint is redundant.

  \item Similarly, if $N$ and $N'$ are both before constraints,
    Lemma~\ref{th:implicit-extension-constraints-all-arcs} implies
    that the $N$ constraint is already implied by $N'$, and is
    redundant.

  \item If $N$ is an after constraint and $N'$ a before constraint:
    \begin{itemize}
    \item If $d' < d$, then any block between $N$ and $N'$ is
      constrained to start after it ends, which is a condition
      impossible to fulfill.
    \item Else $d = d'$: either a block between $N$ and $N'$ is not
      empty (i.e. has at least one instruction), in which case we must
      execute some instructions in 0 time, which is impossible. Else
      all blocks between $N$ and $N'$ are empty. Then we can replace
      $N$, $N'$ and the blocks between by a single synchronization
      node of date $d$: by
      Lemma~\ref{th:implicit-extension-constraints-all-arcs}, the
      constraints of all blocks after $N'$ and before $N$ are
      preserved.
    \end{itemize}
  \end{itemize}
\end{proof}

Thus, there is only one case where it is useful that a node precedes a
node of earlier date, which is the fourth case of
Figure~\ref{fig:nodes-succeeding-later-node}.

\paragraph{Relative labeling of constraints}

The following is an immediate corollary of
Theorem~\ref{th:useless-undoable-constraints}:

\begin{corollary}
  A chain can be simplified so that the dates of all constraint nodes
  following an after node of date $d$ are greater than $d$.
\end{corollary}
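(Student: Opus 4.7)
The plan is to derive the corollary by repeatedly applying Theorem~\ref{th:useless-undoable-constraints} to every after node that has a subsequent constraint node violating the condition, and then invoking a termination argument to conclude that the rewriting stabilizes.

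Concretely, I would proceed by induction on a well-chosen potential function over chains. For each after node $N$ of date $d$, I scan forward to find the first constraint node $N'$ of date $d' \le d$. If no such $N'$ exists, the desired property already holds at $N$ and we move on. Otherwise the theorem gives three cases: if $N'$ is an after node, we delete $N'$; if $N'$ is a before node with $d' = d$, we replace $N$ and $N'$ (and the necessarily empty blocks between them, as established in the theorem's proof) by a single synchronization point of date $d$; and if $N'$ is a before node with $d' < d$, the chain is infeasible and thus excluded from consideration as a well-formed time-constrained chain. In each applicable case, the total number of constraint nodes in the chain strictly decreases by one, so using this count as a potential function guarantees termination after finitely many steps.

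The resulting chain satisfies the corollary by construction: no after node is followed by any constraint node of earlier or equal date, since any such pair would have triggered one more simplification. The main subtlety to handle carefully is the synchronization point case: after merging, the original after node $N$ no longer exists as an after node, so the corollary's quantification over after nodes continues to be respected without needing to check the sync point itself. A secondary point is that the rewriting must preserve the original temporal semantics of the chain, which is exactly what Theorem~\ref{th:useless-undoable-constraints} guarantees (the removed or merged constraints are either redundant or equivalent to the new synchronization point via Lemma~\ref{th:implicit-extension-constraints-all-arcs}), so feasibility and the set of valid schedules are unchanged.
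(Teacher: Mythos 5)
Your proposal is correct and is essentially the paper's own argument: the paper states this as an ``immediate corollary'' of Theorem~\ref{th:useless-undoable-constraints} with no written proof, and your iterate-the-theorem-until-stable elaboration (delete dominated after nodes, merge equal-date after/before pairs into synchronization points, exclude the infeasible case) is exactly the intended reasoning. One small caveat: since the paper allows chains to be infinite, your global constraint-node count is not a well-founded potential in general; but because each rewrite is local and never creates a new violation upstream, this is repaired by processing violations left to right and noting that every finite prefix stabilizes after finitely many steps.
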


Thus the following labelling convention can be adopted without
modifying the semantics of our model: \emph{all node dates can be
  labeled using the relative date from the previous after node}
(including synchronization points). This convention will allow
expression of loops in automata. We denote the use of relative
labeling by putting underscores below dates. The first node of a chain
is labeled relatively from 0.

Figure~\ref{fig:relative-labelling} represents the re-labeling of
Figure~\ref{fig:constrained-chaine-possible-schedule} with relative
dates.

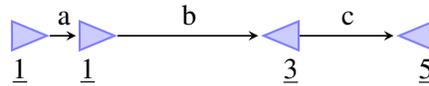
\begin{figure}[htbp]
  \centering
  
  \begin{tikzpicture}[xscale=1.8]

    \begin{scope}[xscale=0.5]
    \node[after,label=below:{\underline 1}] (1) at (1,0) {};
    \node[after,label=below:{\underline 1}] (2) at (2,0) {};
    \node[before,label=below:{\underline 3}] (5) at (5,0) {};
    \node[before,label=below:{\underline 5}] (7) at (7,0) {};

    \draw[block] (1) -- node[above] {a} (2);
    \draw[block] (2) -- node[above] {b} (5);
    \draw[block] (5) -- node[above] {c} (7);
  \end{scope}
  
  \end{tikzpicture}

  \caption{Chain of
    Figure~\ref{fig:constrained-chaine-possible-schedule} with
    relative labeling}
  \label{fig:relative-labelling}
\end{figure}

Chains can be used to model the history of a job release by a task.
However they have some obvious limitations: infinite computations can
only be modeled by infinite chains, which cannot be written in a
specification; and chains cannot model conditional execution of
blocks.

\subsection{Time-constrained trees}

We extend the previous concept of chains to \emph{trees}, which
requires to handle ``choices''.

We now allow several blocks to start from a node (such a node is
called a \emph{choice node}). This expresses the fact that different
execution paths may be taken. The choice of which path to take is made
when finishing executing the immediately preceding block.
Figure~\ref{fig:addition-of-choice} gives an example.

\begin{figure}[htbp]
  \centering

  \begin{tikzpicture}[xscale=1.8]

    \node[after,label=below:{\underline 1}] (1) at (0,0) {};
    \node[noconstraint] (2) at (1,0) {};
    \node[before,label=below:{\underline 3}] (5) at (2,0) {};
    \node[before,label=below:{\underline 2}] (7) at (2,-1) {};

    \draw[block,dashed] (1) +(-0.4,0) -- (1);
    \draw[block] (1) -- node[above] {a} (2);
    \draw[block] (2) -- node[above] {b} (5);
    \draw[block] (2) |- node[above,near end] {c} (7);

    \draw[block,dashed] (5) --  +(0.4,0);
    \draw[block,dashed] (7) --  +(0.4,0);
  
  \end{tikzpicture}

  \caption{Depending on the execution of $a$, either $b$ or $c$ will
    be executed. The path $a\to b$ has 2 units of time to complete,
    but the path $a\to c$ only has 1.}
  \label{fig:addition-of-choice}
\end{figure}
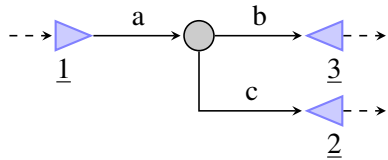

All paths in the graph constitutes chains, and the ``precedes'' and
``succeeds'' relationships are still defined. Thus the previous
theorems and lemmas that applied to chains still hold.

\subsection{Automata}

To represent infinite computations with finite objects, we use
automata. Automata differ from trees in two aspects: first they allow
several arcs to finish on the same node; second they allow cycles in
the graph. These differences change the \emph{precedes} relation on
blocks, and affects negatively the semantics, which depends on this
relation. We address these problems by defining the semantics of
automaton as such: the semantics of a time-constrained automaton is
equivalent to that of its \emph{unfolded tree}, which is the infinite
tree representing all possible traversals of the automaton. Thus, the
\emph{precedes} relation, and the semantics, are preserved.

Moreover, time in automata is expressed using relative labelling, and
the unfolding operation also performs the conversion to absolute
labelling.

Figure~\ref{fig:unfolding-automaton} shows example of such an
unfolding.

\begin{figure}[htbp]
  \centering

  {
 \begin{tikzpicture}[xscale=1.6, scale=1.1,yscale=0.8]

   \begin{scope}[shift={(-3,-1)}]

   \node[noconstraint] (A) at (0,0) {};
   \node[synchronisationpoint,label=below:{\underline 2}] (B) at (-1,0) {};
   \node[synchronisationpoint,label=below:{\underline 1}] (C) at (1,0) {};

   \draw[block] (A) +(0,0.4) -- (A);
     \draw[block] (A) edge[bend left=45]  node[below] {a} (B);
     \draw[block] (B) edge[bend left=45] node[above] {b} (A);
     \draw[block] (A) edge[bend right=45]  node[below] {c} (C);
     \draw[block] (C) edge[bend right=45] node[above] {d} (A);

     \node at (2,0) {{\LARGE \bf =}};
     
   \end{scope}

     \begin{scope}

       \node[noconstraint] (A1) at (0,0) {};
       \node[synchronisationpoint,label=right:2] (B1) at (0,-1) {};
       \node[synchronisationpoint,label=below:1] (C1) at (1,0) {};
       \node[noconstraint] (A2) at (2,0) {};
       \node[synchronisationpoint,label=right:3] (B2) at (2,-1) {};
       \node[synchronisationpoint,label=below:2] (C2) at (3,0) {};
       \node[noconstraint] (A3) at (0,-2) {};
       \node[synchronisationpoint,label=right:4] (B3) at (0,-3) {};
       \node[synchronisationpoint,label=below:3] (C3) at (1,-2) {};

       \node[invisible] (A4) at (4, 0) {};
       \node[invisible] (A5) at (2, -1.8) {};
       \node[invisible] (A6) at (1.5, -2) {};
       \node[invisible] (A7) at (0, -4) {};

       \draw[block] (A1) +(0,0.4) -- (A1);
       \draw[block] (A1) edge node[above] {c} (C1);
       \draw[block] (C1) edge node[above] {d} (A2);
       \draw[block] (A1) edge node[left] {a} (B1);
       \draw[block] (B1) edge node[left] {b} (A3);

       \draw[block] (A2) edge node[above] {c} (C2);
       \draw[block,dashed] (C2) edge node[above] {d} (A4);
       \draw[block] (A2) edge node[left] {a} (B2);
       \draw[block,dashed] (B2) edge node[left] {b} (A5);

       \draw[block] (A3) edge node[above] {c} (C3);
       \draw[block,dashed] (C3) edge node[above] {d} (A6);
       \draw[block] (A3) edge node[left] {a} (B3);
       \draw[block,dashed] (B3) edge node[left] {b} (A7);

     \end{scope}

   \end{tikzpicture}}

  \caption{Unfolding of an automaton}
  \label{fig:unfolding-automaton}
\end{figure}
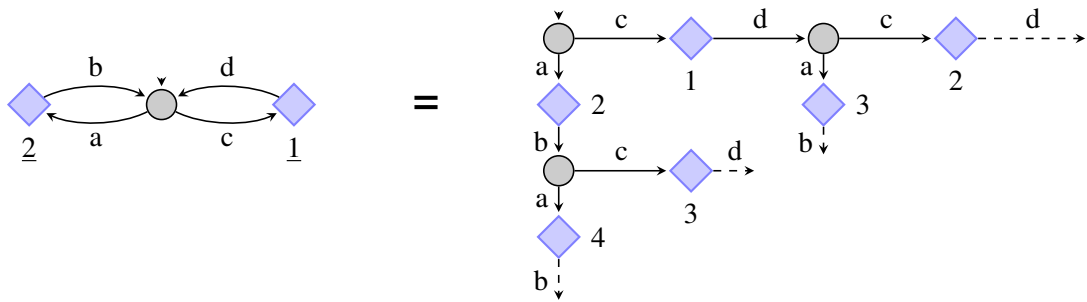

This semantics of automata using ``unfolding'' allows to preserve the
``precede'' and ``succeed'' relationships as defined on trees, which
are the basis for the semantics of time-constrained computations.

Note that all time-constrained computations cannot be represented by a
finite automaton. For instance, an algorithm executing a block in
loop, with the $k^{\mathrm{th}}$ iteration constrained to execute
between instants $2^{k}$ and $2^{k+1}$ (because relative constraints
can only be added). This case is of little practical interest though;
in fact, we believe that most interesting cases are representable with
automata (with only one extension not shown here for the sake of
simplicity, which is the synchronization with a different clock).

\section{Applications}
\label{sec:applications}

In this section we present some basic examples of time-constrained
automata application, as well as an implementation based on the
\psic{} programming language.

\subsection{Example uses}

\begin{figure}[htbp] 
\centering  \hbox{\hspace{0cm}
  \subfigure[Periodic task of period 2 with relative deadline equal to the period.]{
    \begin{tikzpicture}    
      \draw[line width=0pt,white] (-1.5,0) -- (1.5,0);

    \node[synchronisationpoint,label=below:{\underline 2}] (P1) at (0,0) {};
    \draw[block] (P1) edge[loop] (P1);
     \end{tikzpicture}}
   \hspace{5mm}

   \subfigure[A periodic task (period 5). $b$ is constrained with
   fine-grained jitter specification (maximum jitter 1).]{
    \begin{tikzpicture}    
        \draw[line width=0pt,white] (-1,0) -- (3,0);
      \begin{scope}[xscale=1.2,xshift=-0.3cm]
        \node[synchronisationpoint,label=below:{\underline 2}] (N1) at (0,0) {};
        \node[after,label=below:{\underline 3}] (N2) at (1,0) {};
        \node[before,label=below:{\underline 1}] (N3) at (2,0) {};

        \node[invisible,label=below:$c$] (a) at (2.2,-1.2) {};

        \draw[block] 
        (N1) edge node[above] {$a$} (N2)
        (N2) edge node[above] {$b$} (N3);
        
        \draw (N3.south) edge[bend left] (a);
        \draw[block,->] (a) edge[bend left] (N1.south);

      \end{scope}
     \end{tikzpicture}}
\hspace{3mm}
\subfigure[Two periodic tasks with period 2 and relative deadline 1,
with respective phase 1 and 2. Time constraints put $a$ and $b$ in
mutual exclusion: $a$ can execute every \mbox{$[2*k+1,2*k+2[$} and $b$
every \mbox{$[2*k,2*k+1[$}.]{
    \begin{tikzpicture}[yscale=0.9,xscale=1.2]    
      \begin{scope}
        \node[after,label=below:{\underline 1}] (N1) at (0,0) {};
        \node[noconstraint] (N2) at (1,0) {};
        \node[before,label=-150:{\underline 1}] (N3) at (2,0) {};
        \node[synchronisationpoint,label=below:{\underline 2}] (N4) at (1,-1) {};

        \path[block] 
        (N1) edge (N2)
        (N2) edge node[above] {$a$} (N3)
        (N3.south) edge[bend left=70] (N4.east)
        (N4.west) edge[bend left=45] (N2);
      \end{scope}

      \begin{scope}[xscale=1.2,xshift=2.8cm]
        \node[after,label=below:{\underline 2}] (N1) at (0,0) {};
        \node[noconstraint] (N2) at (1,0) {};
        \node[before,label=-150:{\underline 1}] (N3) at (2,0) {};
        \node[synchronisationpoint,label=below:{\underline 2}] (N4) at (1,-1) {};

        \path[block] 
        (N1) edge (N2)
        (N2) edge node[above] {$b$} (N3)
        (N3.south) edge[bend left=70]  (N4.east)
        (N4.west) edge[bend left=45] (N2);
      \end{scope}
     \end{tikzpicture}}}

\centering

\subfigure[Synchronisation using time: $s$ sends a message before time
$3$, $r$ receives it after time $3$. This guarantees that the message
will always be received.]{
    \begin{tikzpicture}    

      \begin{scope}[xshift=-2cm]
        \path (-1.2,0) -- (8.2,0);

        \node[noconstraint] (N0) at (0,0) {};
        \node[after,label=below:3] (N1) at (1,0) {};
        \node[noconstraint] (N2) at (2,0) {};
      \end{scope}

        \begin{scope}
          \node[noconstraint] (N3) at (3,0) {};
          \node[before,label=below:3] (N4) at (4,0) {};
          \node[noconstraint] (N5) at (5,0) {};
        \end{scope}
        \path[block] (N0) edge (N1)
        (N1) edge node[above] {$r$} (N2)
        (N3) edge node[above] {$s$} (N4)
        (N4) edge (N5);

     \end{tikzpicture}}

   \caption{Example uses of time-constrained automata}
   \label{fig:example-time-constrained-automata}
\end{figure}
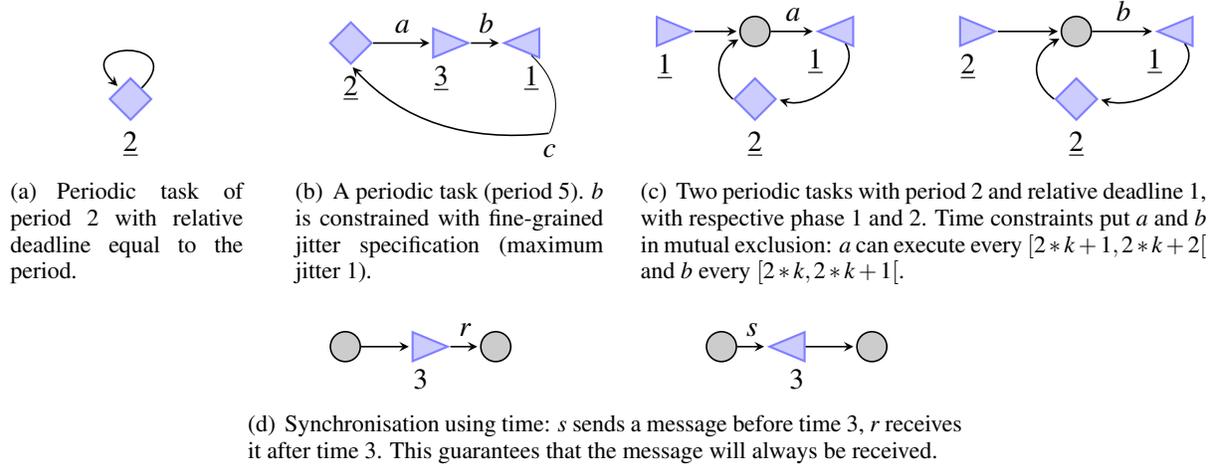

Time-constrained automata can be used to accurately model tasks timing
requirements in a great variety of situations, as it is a superset of
existing time-triggered models.

The basic example is modeling periodic tasks with deadline equal to
their period \cite{liu73scheduling}
(Figure~\ref{fig:example-time-constrained-automata}(a)). More evolved
usage are general periodic tasks, whose deadlines can be smaller or
even larger than periods
(Figure~\ref{fig:example-time-constrained-automata}(c)).
Fine-grained maximum jitter may also be specified
(Figure~\ref{fig:example-time-constrained-automata}(b)).

A periodic task might require an initialization stage, which may have
time constraints (this may be the case, e.g., for device
initialization). This also allows to phase different periodic tasks,
as in Figure~\ref{fig:example-time-constrained-automata}(c), which
allows implementation of mutual exclusion based on time.

The before and after constraints are also well-suited to synchronize
tasks that must communicate.
Figure~\ref{fig:example-time-constrained-automata}(d) gives an example
of a sender and a receiver; the problem is detailed in
section~\ref{sec:safe-inter-time}.

Other applications include going into a degraded mode, or
communicating with device that have complex timing requirements. For
instance we easily developed a mouse and keyboard driver in OASIS that
work without using hardware interrupts. Different stages (detection of
plug/unplug, initialization, normal polling) have different timing
requirements, which can be described accurately with time-constrained
automata.

To sum up, time-constrained automata are a powerful tool to specify
time-constrained tasks that have timing requirements. But they can be
more than that: in the next sections, we show how these automata can
be derived from source code and be used for scheduling.

\subsection{Writing time-constrained programs}
\label{sec:writing-time-constrained-programs}

\psic{} is a programming language designed for implementing
time-constrained automata. \psic{} preserves the operational semantics
of C, but adds time constraints to these semantics with the $\Psi$
extension (this extension could be applied to any imperative
programming language).

C control flow graphs are automata, so C's instructions for control
flow can be used to express sequencing of blocks, loops, and choices.
The basic $\Psi$ addition to C is the addition of \texttt{before},
\texttt{after}, and \texttt{advance} instructions that respectively
add before and after constraints, and synchronization points. It then
becomes possible to express time-constrained automata in $\Psi$C.
There are other extensions to \psic{}, to express for instance
communication between agents (with automatic buffer sizing) and
synchronization between different clocks.

Figure~\ref{fig:code-excerpt-cfg} gives a \psic{} code excerpt with
the corresponding automaton. (Note that our automata differs slightly
from usual control flow graphs because code is carried by arcs,
instead of nodes.)

\begin{figure}[htbp]
  \hbox{\hspace{3cm}
  \begin{minipage}{0.40\linewidth}

\begin{flushleft}
\texttt{while(1) \{}\\
\texttt{\ \ after(1);}\\
\texttt{\ \ if(...) \{}\\
\texttt{\ \ \ \ after(2);}\\
\texttt{\ \ \ \ }\textit{a:}\\
\texttt{\ \ \ \ for(i=0;i<10;i++) }\\
\texttt{\ \ \ \ \ \ \{ }\textit{b:}\\
\texttt{\ \ \ \ \ \ \ \ advance(1);\ }\\
\texttt{\ \ \ \ \ \ \ \ }\textit{c: }\texttt{\}}\\
\texttt{\ \ }\textit{d: }\texttt{\}}\\
\texttt{\ \ else before(2);}\\
\texttt{\ \ }\textit{e:}\\
\texttt{\ \ advance(5);}\\
\texttt{\}}\\
\end{flushleft}

    
  \end{minipage}
  \hfill
  \begin{minipage}{0.6\linewidth}
{
    \begin{tikzpicture}[xscale=1.05,yscale=0.8]
      \node[noconstraint,label=right:while] (while) at (0,0) {};
      \node[after,rotate=-40,label=8:after(1)] (after1) at (1,-1) {};
      \node[noconstraint,label=right:if] (if) at (2,-2) {};
      \node[after,rotate=-40,label=8:after(2)] (after2) at (3,-3) {};
      \node[noconstraint,label=10:for] (for) at (4,-4) {};
      \node[synchronisationpoint,label=below:{advance(1)}] (advance1) at (5,-5) {};
      \node[before,rotate=-45,label=188:before(2)] (before2) at (2,-4) {};
      \node[noconstraint] (joinif) at (3,-5) {};
      \node[synchronisationpoint,label=below:{advance(5)}] (advance5) at (3,-6) {};

      \draw[block] 
      (advance5) edge[bend left=30] (while)
      (while) edge (after1)
      (after1) edge (if)
      (if) edge (after2)
      (if) edge[bend right] (before2)
      (after2) edge node[below left] {\it a} (for)
      (for) edge[bend left=45] node[above right,near end] {\it b} (advance1)
      (advance1) edge[bend left=45] node[below left] {\it c} (for)
      (for) edge node[below] {\it d} (joinif)
      (before2) edge (joinif)
      (joinif) edge node[right] {\it e} (advance5);
    \end{tikzpicture}}

  \end{minipage}}

\caption{A \psic{} code excerpt and corresponding automaton. Blocks
  \textit{a, b, c, d,} and \textit{e} are labelled in both the code
  and the automaton. As an example ``\textit{a;b}'' and
  ``\textit{c;b}'' must be executed within 1 unit of time;
  ``\textit{c;d;e}'' must be executed within 5. }
\label{fig:code-excerpt-cfg}
\end{figure}
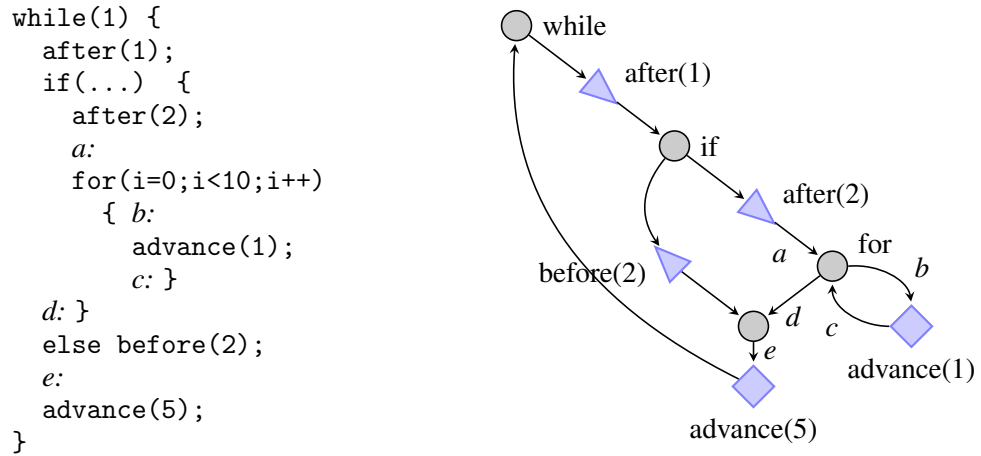

\subsection{Safe interaction of time-constrained tasks}
\label{sec:safe-inter-time}

One of the most interesting aspects of the time-constrained
methodology is that it allows to define communication primitives for
safe interactions between tasks. 

\paragraph{Classical problems of task interactions}

Common problems in communication and synchronization primitives
include:

\begin{itemize*}
\item Deadlock, happening when synchronizing communications (mutex,
  synchronous IPC...) happen in the wrong order.
\item Change in scheduling, happening for every synchronizing
  communication. These makes scheduling and schedulability analysis
  much more complex \cite{lemerre09communication} (e.g. to take into
  account priority inheritance).
\item Nondeterminism, which can happen for every kind of communication.
  This makes executions hard or impossible to reproduce, rendering
  tests useless.
\item Insufficient buffer sizes, which happens for buffered
  communication. This can lead to unpredictable blocking time.
\end{itemize*}

The communication primitives we provide, use time constraints to
avoid all these pitfalls:

\paragraph{Synchronization}

First, we do not provide any synchronization or synchronizing
primitive: all communications are asynchronous, but their ordering is
controlled by the time constraints. For instance,
Figure~\ref{fig:example-time-constrained-automata}(d) shows how we can
enforce block $r$ to happen after block $s$, by choosing an arbitrary
date (3) for ``synchronization''. This slightly over-constrains the
system; but the benefits of doing so (simple optimal scheduling,
simple schedulability analysis) also lead to a gain of performance,
and we think this is a good trade-off.

\paragraph{Determinism}

Second, message communication can be made deterministic. The first
source of nondeterminism occurs because of ordering problems between
the sender and the receiver. See for instance communication using a
variable in shared memory: if the variable is written by the sender
before it is read by the receiver, then communication happens; if the
order is reversed, it does not happen. Using time constraints, the
sending block can be enforced to end before the receiving block
begins, using a ``synchronization date'' as in
Figure~\ref{fig:example-time-constrained-automata}(d). But nothing
prevents another block in the receiver to try to read the value before
this synchronization date, an operation that nondeterministically
succeeds or fails.

This problem is solved by introducing the \emph{visibility date}
concept. A communication can be made only if the receiving block is
always after the visibility date (i.e. the receiving block succeeds
an after node whose date is greater than the visibility date), and if
the sending block is always before the visibility date (i.e. it
precedes a before node whose date is smaller than the visibility
date). This is achieved by tying the communication primitives to the
time constraints, and using an appropriate implementation of the
communication primitives. 

The second source of nondeterminism occurs when multiple agents send a
message to another agent at nearly the same time. If the reception
depends on sending order, message reception can be nondeterministic.
This can be solved using appropriate implementation, and is largely
independent from the time-constrained model. The combination of these
two techniques allows implementation of communication primitives that
are provably deterministic.

\paragraph{Buffer sizes}

The last possible caveat is buffer size. If a buffer is not large
enough to contain all the messages, a run-time situation can occur
where the sender cannot store the message it wants to send. Then it
must either block (but we do not want to provide synchronizing
primitives) or throw a runtime error, which is difficult to analyze.

Automatic buffer computation efficiently solves this problem. Time
constraints allows to infer the respective rates and phases of
communication production/consumption, which allows to know when buffer
parts can be re-used, and to infer the exact sizing of the buffer.
Details vary according to the communication primitive used.

\paragraph{OASIS primitives for safe interaction}

Thus time constraints are of great help for designing communication
primitives for safe interaction. We have implemented several such
primitives: \emph{temporal variable} implements a 1-to-$n$ regular
data flow, while \emph{message} are $n$-to-1 irregular communication.
Several others are being implemented in the context of the PharOS
project. More details can be found in
\cite{chabrol05deterministic,aussagues09os}.

In practice, the methodology for designing OASIS applications consists
in writing Gantt charts with the timing constraints of the receiver
and sender tasks, and of the communication primitives. The periods and
phases are tuned to respect the end-to-end requirements of the
tasks. But we are currently working on a more formal methodology for
designing OASIS applications.

\subsection{Experience with the time-constrained methodology}

Writing real-time applications using the time-constrained methodology
(and $\Psi$C programming language) allows to write safe, parallel
programs with ease (see \cite{david04oasis} for a large practical
example from the nuclear industry). The specification, design and
implementation are tightly coupled, which greatly simplifies
verification and validation. Verification and validation is generally
the most costly phase when designing a real-time system, especially
when it is safety-critical.

Parallel programs in $\Psi$C are \emph{deterministic}, i.e. have
predictable and reproducible execution. Hence, tests are reproducible
despite the parallel execution.

Time-constrained tasks synchronize only using time, and do not use
mutexes or semaphores. This allows us to perform \emph{exact}
feasibility analysis, and to reach high processor utilization, even on
multiprocessors. This also provides safety guarantees (deadlock is
impossible).

\vspace{-4mm}
\section{Scheduling of time-constrained tasks}
\label{sec:scheduling-time-constrained-tasks}

This section presents the precise scheduling semantics of the
time-constrained automata, and gives an optimal scheduling algorithm
on single processor based on EDF. We decompose again the presentation
in three parts: scheduling of chains, trees, and automata.

\subsection{Definitions}

\paragraph{Required execution time function}

We assume the existence of a function $||\cdot|| : Blocks \to
\mathbb{R}$ that gives the execution time necessary to complete a
block. Note that this function is not necessary to define the
semantics of TCA.

\paragraph{Validity and correctness}

We say that a schedule is \emph{valid} when it respects the semantics
of the task model (e.g. executes a job only between its start time and
deadline). We say that it is \emph{correct} when it is valid and tasks
have enough time to complete before deadline.

\paragraph{Feasibility and optimal algorithm} 

A set of tasks is \emph{feasible} if there exists a correct
schedule. An \emph{optimal} scheduling algorithm finds a correct
schedule whenever one exists (i.e. whenever the set of tasks is
feasible).

\subsection{Scheduling of time-constrained chains}
\subsubsection{Semantics}

\paragraph{Schedule mapping}

On single processor computers, a schedule is a function $s :
Blocks \times \mathbb{R}_{+} \to \{0,1\}$ that tells whether block
$b$ is scheduled at time $t$.

In an interval of time $[t_{1},t_{2}]$, a block $b$ is executed for a
duration of $\int_{t_{1}}^{t_{2}} s(b,t) \mathrm{d}t$

For multiprocessor systems, the definition is the same, because the
specific placement on the processors can be
abstracted~\cite{lemerre08equivalence}.

\paragraph{Conditions for a valid schedule of time-constrained chains}

Conditions for a function $s$ to be a valid schedule express the
sequentiality of blocks and the time constraints in the schedule:

\begin{itemize*}
\item If a block $b$ has an ``after'' constraint of date $d$, it must
  not be scheduled before $d$: 
  \[ \forall t,\quad t < d \ \Rightarrow\ s(b,t) = 0 \]
\item If a block $b$ has a ``before'' constraint of date $d'$, it must
  not be scheduled after $d$:
  \[ \forall t,\quad t > d' \ \Rightarrow\  s(b,t) = 0\]
\item If block $b$ precedes block $b'$, it must be scheduled before
  $b'$: (note: the two formulas are equivalent)

  \[ \forall t',\quad \big( s(b',t') \ne 0 \ \Rightarrow\ \forall t > t',\ s(b,t) = 0  \big) \quad\iff\quad \forall t,\quad \big( s(b,t) \ne 0 \ \Rightarrow\ \forall t' < t,\ s(b',t') = 0  \big) \]

\end{itemize*}

\paragraph{Condition for a correct schedule}

A correct schedule is a valid schedule, with the condition that all
blocks $b$ must be scheduled for at least their required execution
time:
  \[ \int_{t=0}^{+\infty} \!\!\!s(b,t) \mathrm{d}t \ \ge \ ||b|| \]

  Note the use of $\ge$ instead of $=$ in the previous equation: this
  allows feasibility analysis to allocate more CPU time than
  necessary.

\subsubsection{Optimal scheduling algorithm on single processors}

Current real-time scheduling algorithms consider tasks as releasing a
sequence of static jobs \cite{baruah04schedulingreal-time}. This is
not applicable to our model because a time-constrained task has
``changing jobs'', i.e. job's execution time and deadline can change
dynamically.

We propose EDF-dyn, an extension to EDF to schedule a set of
time-constrained chains:

\begin{definition}[EDF-dyn]
  EDF-dyn schedules at each instant $t$, the task whose current
  block's implicit deadline is the soonest, chosen among all tasks
  whose current block's implicit start date is sooner than the current
  date. Ties can be broken arbitrarily.
\end{definition}

For a set of chains $\mathcal{C}$, we note
${\texttt{EDF-dyn}}(\mathcal{C})$ the schedule produced by EDF-dyn.

The proof that EDF-dyn is optimal on time-constrained chains can be
readily adapted from the proof that EDF is optimal on static jobs:

\begin{theorem}
  EDF-dyn is optimal for time-constrained chains on single processors :
  if $\mathcal{C}$ is feasible, then $\mathtt{EDF-dyn}(\mathcal{C})$
  is a correct schedule.
\end{theorem}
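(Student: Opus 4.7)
The approach is to adapt Dertouzos' classical exchange argument for EDF to this dynamic setting, where each chain's \emph{current block} (and hence its implicit deadline) evolves with the schedule. Fix a correct schedule $S$ of $\mathcal{C}$, which exists by feasibility, and let $S^{*} = \mathtt{EDF-dyn}(\mathcal{C})$. The plan is to exhibit a sequence of local modifications that transforms $S$ into $S^{*}$ while preserving correctness at every step; optimality of EDF-dyn then follows since $S^{*}$ is the limit schedule.

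Let $t^{*}$ be the earliest instant at which $S$ and $S^{*}$ differ. If $t^{*} = \infty$ we are done; otherwise $S$ executes some block $b_{1} \in \mathcal{C}_{1}$ at $t^{*}$ while $S^{*}$ executes $b_{2} \in \mathcal{C}_{2}$. Because $S$ and $S^{*}$ coincide on $[0, t^{*})$, the chain histories and hence the current blocks agree there: in particular $b_{2}$ is also the current block of $\mathcal{C}_{2}$ under $S$ at $t^{*}$, its implicit after-date has already passed, and its implicit deadline $d_{2}$ satisfies $d_{2} \le d_{1}$ by the EDF-dyn choice. Correctness of $S$ then forces all of $b_{2}$'s execution to lie in $[t^{*}, d_{2}]$, so there is a least $t^{**} \in (t^{*}, d_{2}]$ at which $S$ executes $b_{2}$. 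I define $S'$ to coincide with $S$ everywhere except that on $[t^{*}, t^{*}+\varepsilon]$ the CPU is given to $\mathcal{C}_{2}$ instead of $\mathcal{C}_{1}$, and on $[t^{**}, t^{**}+\varepsilon]$ it is given to $\mathcal{C}_{1}$ instead of $\mathcal{C}_{2}$, for some small $\varepsilon > 0$.

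Checking that $S'$ is still correct is the heart of the proof. The implicit start- and deadline-constraints for $b_{2}$'s new slot are immediate: $b_{2}$ was already eligible at $t^{*}$ and $t^{*}+\varepsilon \le d_{2}$, while for the $\mathcal{C}_{1}$ block newly appearing at $t^{**}$ one has $t^{**}+\varepsilon \le d_{2} \le d_{1}$. The delicate point, which I expect to be the main obstacle, is chain precedence within $\mathcal{C}_{1}$: naively moving a slice of $b_{1}$ from $t^{*}$ to $t^{**}$ can delay $b_{1}$'s completion past a successor that $S$ had already scheduled in $(t^{*}, t^{**})$. The resolution is to read the modification as a swap of \emph{chain allocations} rather than of named block slots: the block executed for a chain at a given instant is determined entirely by the cumulative CPU time given to that chain so far, and our swap perturbs this cumulative quantity for $\mathcal{C}_{1}$ (and symmetrically $\mathcal{C}_{2}$) by at most $\varepsilon$ on $[t^{*}, t^{**}+\varepsilon]$ and leaves it unchanged outside. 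Consequently each chain still executes its blocks in chain order, each block's execution is shifted in time by at most $\varepsilon$, and for $\varepsilon$ small enough no static after- or before-date is crossed.

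After each swap $S'$ agrees with $S^{*}$ on a strictly larger initial prefix than $S$. Iterating and passing to the supremum of those prefixes yields a correct schedule equal to $S^{*}$, because correctness transfers to the limit: every block's execution is contained in the interval up to its deadline, on which the schedule stabilizes after finitely many swaps. Hence $\mathtt{EDF-dyn}(\mathcal{C})$ is itself correct, which is the claimed optimality.
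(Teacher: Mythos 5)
Your proposal follows essentially the same route as the paper: the paper's own proof is a four-line sketch that invokes the classical EDF exchange argument (it even cites Dertouzos elsewhere for this), with ``job'' replaced by ``block'' and ``deadline'' by ``implicit deadline'', framed as a contradiction at the last instant $t_0$ up to which some correct schedule agrees with \texttt{EDF-dyn}. Your iterate-and-swap formulation is that same argument unrolled forward, and to your credit you go further than the paper by isolating the one place where dynamic chains genuinely differ from static jobs --- precedence inside $\mathcal{C}_1$ --- and your reading of the swap as an exchange of \emph{chain allocations} (so that cumulative CPU time, not named slots, determines which block runs) is the right fix for block ordering.

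There are, however, two soft spots. First, the claim ``for $\varepsilon$ small enough no static after- or before-date is crossed'' is not justified as stated: smallness of $\varepsilon$ does not help if some block of $\mathcal{C}_1$ completes in $S$ exactly at its implicit deadline inside $(t^{*},t^{**})$, since then \emph{any} positive delay breaks correctness. What actually rescues the swap is a fact you never state: implicit deadlines are non-decreasing along a chain (a block's implicit deadline is the minimum over the before-nodes succeeding it, and a successor minimizes over a subset of those nodes), so every $\mathcal{C}_1$ block whose execution you delay in $(t^{*},t^{**})$ has implicit deadline at least $d_1 \ge d_2$, while the delayed work completes by $t^{**}+\varepsilon$ with $t^{**} < d_2$ (because $b_2$ has positive remaining work that $S$ must fit inside $[t^{**},d_2]$); choosing $\varepsilon \le d_2 - t^{**}$ then closes the argument, and after-dates are safe since $\mathcal{C}_1$ is only delayed and $b_2$ was already eligible at $t^{*}$. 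Second, your limit step is weaker than you assert: each swap extends agreement by some $\varepsilon_n$, and nothing prevents $\sum_n \varepsilon_n$ from converging, so ``the schedule stabilizes after finitely many swaps'' up to any fixed deadline is unproven; the paper's contradiction framing at the supremal agreement instant $t_0$ (any correct schedule agreeing up to $t_0$ can be made to agree until $t_0+\delta$, contradicting maximality) sidesteps having to construct and certify a limit schedule, and is the cleaner way to finish. Both gaps are repairable within your scheme, which is otherwise faithful to --- and more detailed than --- the paper's own proof sketch.
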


\begin{proof}[Proof sketch] 
  The proof is the same as the standard EDF optimality proof
  (e.g.~\cite{baruah04schedulingreal-time}), obtained by replacing
  ``job'' by ``block'' and ``deadline'' by ``implicit deadline''. The
  proof is by absurd: if EDF-dyn is not optimal, and $t_0$ is the last
  instant where no more correct schedule agrees with EDF-dyn, we
  construct a correct schedule that agrees with EDF-dyn until $t_o +
  \delta$.
\end{proof}

EDF-dyn can be efficiently implemented, as seen in the next section.

\subsubsection{Implementation}

\paragraph{High-level algorithm}
\label{sec:formal-algorithm}

\begin{itemize*}
\item When a scheduling decision is made, the algorithm selects
  the task on the ready list with the smallest current deadline;
\item The scheduler is awaken only when:
  \begin{enumerate*}
  \item the current date becomes the same as the date of some
    ``after'' node; the current task is added to the ready list, and a
    new decision is made;
  \item an ``after'' node is reached, whose date is bigger than the
    current date; the current task is removed from the ready list, and
    a decision is made;
  \item a ``before node'' is reached. The ``current deadline'' of the
    task is changed, and a decision is made.
  \end{enumerate*}
\end{itemize*}

Intuitively, case 1. represent new job becoming available; case 2. the
fact that a task has to wait before continuing execution; case 3. that
execution of a blocks finishes before its deadline.

\begin{note} This algorithm behaves as standard EDF on regular jobs. \end{note}

\begin{theorem}
  This algorithm implements EDF-dyn.
\end{theorem}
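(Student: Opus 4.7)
The plan is to argue that at every instant $t$, the algorithm selects exactly the task that EDF-dyn would select. To do this I will exhibit an invariant linking the algorithm's internal state to the abstract state used in the definition of EDF-dyn, and then check that every event that can change either side of the invariant is one of the three cases treated by the algorithm.

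Concretely, for each task I would define two quantities derived from its current position in its chain: the \emph{implicit start date} $\sigma(t)$ (the latest after-node preceding or equal to its current block) and the \emph{implicit deadline} $\delta(t)$ (the earliest before-node succeeding or equal to its current block), as supplied by Lemma~\ref{th:implicit-extension-constraints-all-arcs}. The invariant I would maintain is: (i) the ready list contains exactly those tasks with $\sigma(t) \le t$, and (ii) for each task its stored ``current deadline'' equals $\delta(t)$. Under this invariant, the algorithm's scheduling choice (smallest stored deadline among the ready list) coincides by definition with EDF-dyn's choice.

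I would then check that the invariant is preserved at every event that can alter $\sigma$, $\delta$, or the truth of $\sigma(t)\le t$. The value of $\sigma$ changes only when a task crosses an after-node; if the node's date has already passed, no change to readiness is needed, but if the date lies in the future the task must leave the ready list, which is case~2; conversely, when real time advances to meet that pending date, the task rejoins the ready list, which is case~1. The value of $\delta$ changes only when a task crosses a before-node, which is case~3. Between these events $\sigma$, $\delta$, and the ready set are constant, so no re-decision is required; this is exactly what justifies the algorithm only waking on these three occasions.

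The main obstacle will be the bookkeeping at nodes that are neither after nor before but merely mark the end of a block (plain nodes, choice nodes, and synchronization points, the latter acting as both an after and a before). I would handle synchronization points by treating them as an atomic ``before then after'' and invoking cases~3 and 1/2 in sequence; I would observe that plain nodes leave both $\sigma$ and $\delta$ unchanged because Lemma~\ref{th:implicit-extension-constraints-all-arcs} propagates the surrounding constraints. Once this case analysis is exhausted, the equality of the algorithm's selection with EDF-dyn's at every $t$ follows, proving the theorem.
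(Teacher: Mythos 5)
Your proof is correct and follows essentially the same approach as the paper, whose one-line sketch states precisely your invariant: that the three wake-up instants suffice to keep the ready list and each task's stored ``current deadline'' up to date, so the algorithm's selection coincides with EDF-dyn's. Your version merely makes the invariant ($\sigma(t)\le t$ characterizing readiness, stored deadline equal to $\delta(t)$) and the event-by-event case analysis explicit, which the paper leaves implicit.
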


\begin{proof}[Proof sketch] The instants where the scheduler is awaken
  are sufficient to make the ready list and ``current deadline'' of
  the tasks always up-to-date.
\end{proof}

\paragraph{Implementation in OASIS}

EDF-dyn is implemented in the OASIS
kernel\cite{chabrol05deterministic}. To wake the scheduler when
``after'' and ``before'' nodes are reached (corresponding to
\texttt{after} and \texttt{before} $\Psi$ instructions in source
code), these are replaced by system calls at compilation.

The chains are simplified using
Theorem~\ref{th:useless-undoable-constraints}, and ``before'' nodes
carry information about the deadline of the following ``before'' node,
so that the next deadline is known without complex lookup. Thus,
``before'' nodes are transformed into ``update deadline'' system
calls.

Finally, we also check that deadlines are not missed by waking the
scheduler up at deadline dates; different blocks can be taken upon
deadline miss (from system stop to degraded mode). We can also check
that a block $b$ is not executed more than $||b||$.

\subsection{Scheduling of time-constrained trees}

\subsubsection{Semantics}

\paragraph{From trees to chains}

A path in a time-constrained tree is a time-constrained chain. Given a set
of trees $\mathcal{T}$ and a set of choices in the tree $\mathcal{U}$,
we define the operator $\mathcal{E}$ so that
$\mathcal{E}(\mathcal{U},\mathcal{T}) = \mathcal{C}$ is the set of
time-constrained chains obtained by extracting from the trees only the
paths corresponding to the choices in $\mathcal{U}$.

\paragraph{Tree schedule}

The schedule for a tree has to take into account the fact that
depending on the choices made, multiple time constraints are
possible; so a schedule for a tree is a function of the choices made.

We refer to the instant when the block preceding the choice node has
finished executing to the \emph{instant of choice}. As the choice is
not known by the scheduling algorithm until this instant, the
schedules must be the same up to it.

More formally, we define a valid schedule for a set of trees
$\mathcal{T}$ to be a function $S$ as such:

\begin{itemize*}
\item $S$ associates to each possible set of choices $\mathcal{U}$,
  $S_{\mathcal{U}}$ which is a valid schedule mapping for the set of
  chains $\mathcal{E}(\mathcal{U},\mathcal{T})$;
\item Additionally, for two different sets of choices $\mathcal{U}$
  and $\mathcal{U'}$, $S_{\mathcal{U}}$ and $S_{\mathcal{U}}$ must
  coincide up to the time $\tau$ when the first choice node that
  differentiates them is reached:
\[ \forall t < \tau,\ \forall b, \quad S_{\mathcal{U}}(b,t) = S_{\mathcal{U'}}(b,t) \]
\end{itemize*}

Figure~\ref{fig:correct-tree-schedule} gives an example.

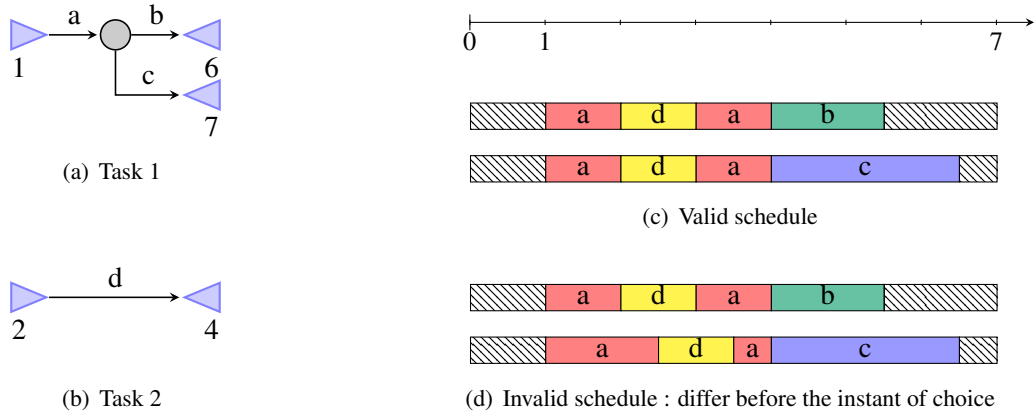
\begin{figure}[htbp]

    
\hbox{\hspace{1.5cm}\vbox{\subfigure[Task 1]{\label{fig:task-1-correct-tree-schedule}
  \begin{tikzpicture}[xscale=1.6, scale=0.8]

    \node[after,label=below:1] (1) at (0,0) {};
    \node[noconstraint] (2) at (1,0) {};
    \node[before,label=below:6] (5) at (2,0) {};
    \node[before,label=below:7] (7) at (2,-1) {};

    \draw[block] (1) -- node[above] {a} (2);
    \draw[block] (2) -- node[above] {b} (5);
    \draw[block] (2) |- node[above,near end] {c} (7);
  
  \end{tikzpicture}}\vspace{6mm}

  \subfigure[Task 2]{
  \begin{tikzpicture}[xscale=1.6, scale=0.8]
    \node[after,label=below:2] (1) at (0,0) {};
    \node[before,label=below:4] (5) at (2,0) {};

    \node[before,white] (7) at (2,-1) {};

    \draw[block] (1) -- node[above] {d} (5);
  
  \end{tikzpicture}}}
\hspace{-10cm}\vbox{
\begin{tikzpicture}[xscale=1]
  \draw[white] (0,0) -- (7.5,0);
  \echellechronogramme{7};
\end{tikzpicture}

\subfigure[Valid schedule]{\label{fig:valid-tree-schedule}
      \begin{tikzpicture}[xscale=1, yscale=0.7]
  \draw[white] (0,0) -- (6.5,0);
    \begin{schedule}
      \newschedule; \emptyplage{1}; \newplage{a}{1}; \newplage{d}{1}; \newplage{a}{1}; \newplage{b}{1.5}; \emptyplage{1.5};
      \newschedule; \emptyplage{1}; \newplage{a}{1}; \newplage{d}{1}; \newplage{a}{1}; \newplage{c}{2.5}; \emptyplage{0.5};
    \end{schedule}
    \end{tikzpicture}}

\subfigure[Invalid schedule : differ before the instant of choice]{
      \begin{tikzpicture}[xscale=1, yscale=0.7]
  \draw[white] (0,0) -- (6.5,0);
     \begin{schedule}
      \newschedule; \emptyplage{1}; \newplage{a}{1}; \newplage{d}{1}; \newplage{a}{1}; \newplage{b}{1.5};  \emptyplage{1.5};
      \newschedule; \emptyplage{1}; \newplage{a}{1.5}; \newplage{d}{1}; \newplage{a}{0.5}; \newplage{c}{2.5}; \emptyplage{0.5};
    \end{schedule}
    \end{tikzpicture}}}}

  \caption{Here, $||a|| = 2$, $||b||=2$, $||c||=1$ and $||d||=1$;
    there is one choice so a tree schedule is two schedule mapping.
    Instant of choice (i.e. last time $a$ is executed) is at time
    $4$.}
\label{fig:correct-tree-schedule}
\end{figure}

\paragraph{Correctness conditions}

We define a schedule $S$ to be correct for a set of trees
$\mathcal{T}$ when all schedules on the corresponding chains are
correct: \[ \forall \mathcal{U},\quad
S_{\mathcal{U}} \mathrm{\ is\ correct.}\]

In Figure~\ref{fig:correct-tree-schedule}, the
schedule~\ref{fig:valid-tree-schedule} is correct because if block $c$
is chosen, the first schedule mapping is correct for the chains $a \to
c$ and $d$; and if block $b$ is chosen, the second is correct for the
chains $a \to b$ and $d$.

This means that all possible executions can be correctly scheduled.

\subsubsection{Choice deadline inheritance}

Online, deadline-based algorithms such as EDF always need to know the
next deadline; when scheduling a tree, this ``next deadline'' is not
known because it depends on future execution. Choice deadline
inheritance solves this problem.

We define \emph{choice deadline inheritance} as follows : for each
``choice'' node, we consider the (temporally) closest ``before'' node
in all possible following nodes. If $\tau$ is the date of this
``before'' node, we add to the choice node a ``before'' constraint of
date $\tau$.

\begin{note}
  If there are no following ``before'' nodes, we do nothing. This can
  be viewed as inheriting a deadline at date $+\infty$.
\end{note}

On a set of trees $\mathcal{T}$, we denote by
$\texttt{CDI}(\mathcal{T})$ the set of trees transformed with choice
deadline inheritance.  Figure~\ref{fig:transformation-with-cdi}
provides an example.

\begin{figure}[htbp]
  \centering

  \begin{tikzpicture}[xscale=1.8]

    \node[after,label=below:1] (1) at (0,0) {};
    \node[before,label=-120:6] (2) at (1,0) {};
    \node[before,label=below:6] (5) at (2,0) {};
    \node[before,label=below:7] (7) at (2,-1) {};

    \draw[block] (1) -- node[above] {a} (2);
    \draw[block] (2) -- node[above] {b} (5);
    \draw[block] (2) |- node[above,near end] {c} (7);

  
  \end{tikzpicture}

  \caption{Figure~\ref{fig:task-1-correct-tree-schedule} transformed
    with \texttt{CDI}. The smallest deadline is 6, so the choice node
    ``inherits'' this deadline.}
  \label{fig:transformation-with-cdi}
\end{figure}
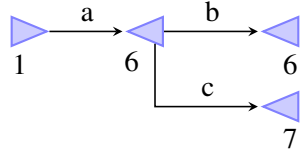

The following theorem shows that choice deadline inheritance, which
adds constraints to the time-constrained trees, does not change its
schedulability:

\begin{theorem}
  A schedule is correct for a set of time-constrained trees
  $\mathcal{T}$ iff it is correct for $\texttt{CDI}(\mathcal{T})$.
\end{theorem}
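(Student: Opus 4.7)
The plan is to handle the two directions of the iff separately, with the forward direction being essentially immediate and the backward direction requiring the coincidence property of tree schedules.

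For the direction $\texttt{CDI}(\mathcal{T})$ correct $\Rightarrow$ $\mathcal{T}$ correct: I would observe that the CDI transformation only adds ``before'' constraints, it never removes any. Therefore every chain in $\mathcal{E}(\mathcal{U},\texttt{CDI}(\mathcal{T}))$ has all the constraints of the corresponding chain in $\mathcal{E}(\mathcal{U},\mathcal{T})$ plus possibly more. A schedule mapping $S_{\mathcal{U}}$ that is valid (resp. correct) for the stricter chain is a fortiori valid (resp. correct) for the original, and the coincidence condition on choices is a property of $S$ that does not depend on the trees themselves. Hence this direction is almost by inspection.

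For the converse direction, let $N$ be an arbitrary choice node to which CDI has attached a ``before'' constraint of date $\tau$ (the case $\tau = +\infty$ is trivial since then no real constraint is added). By definition of CDI, some path $P_{i}$ branching out of $N$ in the tree hits a ``before'' node with this minimum date $\tau$. Let $b$ denote the block immediately preceding $N$, and pick any set of choices $\mathcal{U}_{i}$ that selects $P_{i}$ at $N$. In the extracted chain $\mathcal{E}(\mathcal{U}_{i},\mathcal{T})$, the block $b$ precedes the ``before'' node of date $\tau$, so Lemma~\ref{th:implicit-extension-constraints-all-arcs} forces $b$ to finish before $\tau$ in $S_{\mathcal{U}_{i}}$. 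Now, for any other set of choices $\mathcal{U}$, the choices $\mathcal{U}$ and $\mathcal{U}_{i}$ differ no earlier than at $N$, so the instant of choice at $N$ (which is precisely when $b$ finishes) falls inside the coincidence prefix of $S_{\mathcal{U}}$ and $S_{\mathcal{U}_{i}}$. Thus the scheduling of $b$ is identical in $S_{\mathcal{U}}$ and $S_{\mathcal{U}_{i}}$, and in particular $b$ also finishes before $\tau$ under $\mathcal{U}$. Applying this argument at every CDI-modified choice node shows that $S$ satisfies all the added ``before'' constraints in every $S_{\mathcal{U}}$; since the required execution times are unchanged and all other constraints were already satisfied by hypothesis, $S$ is correct for $\texttt{CDI}(\mathcal{T})$.

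The step that requires most care is the coincidence argument linking the different $S_{\mathcal{U}}$: I have to check that the instant at which $b$ finishes is indeed $\le$ (not just strictly before) the first $t$ at which the two schedules may diverge, so that the duration $\int s(b,t)\mathrm{d}t$ and the finishing time of $b$ are genuinely common to $S_{\mathcal{U}}$ and $S_{\mathcal{U}_{i}}$. This is exactly what the definition of ``instant of choice'' as the end of the block preceding the choice node gives, but it is worth stating explicitly. A minor secondary point is to justify that it suffices to check one choice node at a time rather than all of them simultaneously, which follows because the CDI constraint added at $N$ only concerns blocks in the (unique) trunk preceding $N$, so the analyses at different choice nodes are independent.
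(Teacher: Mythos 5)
Your proof is correct and follows essentially the same route as the paper's: the forward direction by monotonicity of added constraints, and the converse by selecting the path realizing the soonest deadline, applying the implicit-extension lemma to the block preceding the choice node, and transferring the bound to all other choice sets via the coincidence condition up to the instant of choice. Your explicit treatment of the endpoint subtlety (that the finishing instant of $b$ lies within the common prefix of $S_{\mathcal{U}}$ and $S_{\mathcal{U}_i}$) is a detail the paper leaves implicit, but it does not change the argument.
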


\begin{proof}
  As $\texttt{CDI}$ add constraints, any schedule correct for
  $\texttt{CDI}(\mathcal{T})$ is correct for $\mathcal{T}$.

  If a schedule is correct for $\mathcal{T}$, we consider a path of
  the tree choosing the soonest deadline. For this path, the block
  before the choice node is implicitly constrained to finish before
  this deadline. And because of the validity condition on trees, all
  schedules for the tree must be the same until this choice node is
  reached, so they all finish before this deadline. Thus the
  additional constraint expressed by $\texttt{CDI}(\mathcal{T})$ is
  fulfilled.
\end{proof}

This proves that the constraints added by choice deadline inheritance
do not affect feasibility. We will now see how they are used to
implement EDF scheduling on trees.

\subsubsection{Optimal scheduling on single processor}

We now define the EDF-dyn-min algorithm (EDF-dyn with minimal
deadline) algorithm on time-con\-strained trees:

\begin{definition}
  EDF-dyn-min schedules at each instant $t$, the task whose current
  block's \emph{possible} implicit deadline is the soonest, chosen
  among all tasks whose current block's implicit start date is sooner
  than the current date. Ties can be broken arbitrarily.
\end{definition}

The schedule of Figure~\ref{fig:valid-tree-schedule} is the schedule
obtained with EDF-dyn-min.

Lemma~\ref{th:eq-edf-dyn-min-edf-dyn-cdi} is central to the online
implementation of EDF-dyn-min:

\begin{lemma}
  \label{th:eq-edf-dyn-min-edf-dyn-cdi}
  Let $\mathcal{T}$ be a set of trees, and $\mathcal{U}$ a set of
  choices for these trees. Then

  \[ {\texttt{EDF-dyn-min}}(\mathcal{T})_{\mathcal{U}} \ =\ 
  {\texttt{EDF-dyn}}(\,\mathcal{E}(\mathcal{U},
  \texttt{CDI}(\mathcal{T}))) \]

  That is, for a given set of choices $\mathcal{U}$, the schedule of
  EDF-dyn-min can be obtained by executing $\texttt{EDF-dyn}$ on the
  paths of $\texttt{CDI}(\mathcal{T})$ corresponding to $\mathcal{U}$.
\end{lemma}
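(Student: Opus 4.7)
The plan is to show, at every scheduling instant $t$, that the two algorithms select the same task. Both are greedy rules comparing a single numerical deadline across an eligibility set, so it suffices to show that, under a fixed consistent tie-breaking rule, (i) the set of eligible tasks agrees, and (ii) the deadline assigned to the currently active block of each task agrees.

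First I would establish that eligibility coincides. Eligibility is determined by the implicit start date of the current block, i.e.\ the most recent after-node preceding it. Since $\texttt{CDI}$ only adds \emph{before} constraints, it leaves every after-node and hence every implicit start date unchanged. Because $\mathcal{E}(\mathcal{U},\cdot)$ merely extracts the chain along the choices $\mathcal{U}$ without altering individual constraints, the implicit start date of a block $b$ in $\mathcal{E}(\mathcal{U},\texttt{CDI}(\mathcal{T}))$ is exactly the implicit start date of $b$ in $\mathcal{T}$. So at every $t$ the same tasks are eligible under both algorithms.

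Second, I would show that the deadline used by the two algorithms on each current block $b$ is the same. Fix a task, its current block $b$, and the realized choice set $\mathcal{U}$. Let $D_{\mathrm{tree}}(b)$ be the ``possible implicit deadline'' used by EDF-dyn-min on $\mathcal{T}$, that is, the earliest before-node date reachable from $b$ along any future path in the tree; let $D_{\mathrm{chain}}(b)$ be the implicit deadline of $b$ in $\mathcal{E}(\mathcal{U},\texttt{CDI}(\mathcal{T}))$. I distinguish two cases. If no choice node lies strictly between $b$ and the next before-node along the $\mathcal{U}$-path, then the next before-node is the same in the tree and in every extracted chain, so the two deadlines trivially agree. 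If a choice node $N$ lies between $b$ and the next before-node along the $\mathcal{U}$-path, then by definition $\texttt{CDI}$ adds to $N$ a before-constraint with date equal to the minimum, over all descendants of $N$, of the nearest following before-node; by Lemma~\ref{th:implicit-extension-constraints-all-arcs} this constraint extends backward to $b$ in the extracted chain, giving $D_{\mathrm{chain}}(b)$. By construction this is precisely the minimum over all paths through $N$ of the soonest before-date, which is $D_{\mathrm{tree}}(b)$.

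Combining (i) and (ii), at every instant $t$ before any choice is resolved, both algorithms pick from the same eligible set the task with the same minimum deadline; after a choice node is reached at its instant of choice, the subsequent behaviour on that branch reduces to the no-choice case above. By induction on the sequence of scheduling events (events in both algorithms occur at the same dates, namely after- and before-nodes crossed in the $\mathcal{U}$-path), the two schedules coincide pointwise. The main obstacle is the second step: one must verify carefully that the minimum over all possible paths through the choice node in the tree coincides with the deadline that $\texttt{CDI}$ attaches to that choice node, and that this inherited deadline correctly propagates back to $b$ as an \emph{implicit} constraint in the extracted chain — this is exactly what the definition of $\texttt{CDI}$ together with Lemma~\ref{th:implicit-extension-constraints-all-arcs} buys us.
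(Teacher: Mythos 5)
Your proposal is correct and follows essentially the same route as the paper, whose (two-sentence) proof sketch rests precisely on your step (ii): the soonest possible implicit deadline used by EDF-dyn-min at a choice node is exactly the date that \texttt{CDI} attaches there, so EDF-dyn on the extracted chains reproduces EDF-dyn-min path by path. Your additional checks --- that eligibility is unaffected because \texttt{CDI} adds only before-constraints and leaves after-nodes (hence implicit start dates) untouched, and the induction over scheduling events --- merely flesh out details the paper leaves implicit in its sketch.
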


\begin{proof}[Proof sketch] The soonest possible implicit deadline in
  the definition of EDF-dyn-min is the constraint's date added by the
  choice deadline inheritance algorithm. Thus for each possible path,
  EDF-dyn is followed with this additional constraint on choices.
\end{proof}

\begin{theorem}
  EDF-dyn-min is optimal on time-constrained trees.
\end{theorem}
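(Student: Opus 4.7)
The plan is to combine the three results already in hand: EDF-dyn is optimal on chains; the schedulability of $\mathcal{T}$ coincides with that of $\texttt{CDI}(\mathcal{T})$; and Lemma~\ref{th:eq-edf-dyn-min-edf-dyn-cdi} relates EDF-dyn-min on trees to EDF-dyn on the chain extractions of $\texttt{CDI}(\mathcal{T})$. Suppose $\mathcal{T}$ is feasible, witnessed by some correct tree schedule $S$. Fix an arbitrary set of choices $\mathcal{U}$. Then $S_{\mathcal{U}}$ is a correct schedule for the chain set $\mathcal{E}(\mathcal{U},\mathcal{T})$, and by the CDI-equivalence theorem it is also correct for $\mathcal{E}(\mathcal{U}, \texttt{CDI}(\mathcal{T}))$. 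Thus this chain set is feasible, so by optimality of EDF-dyn on chains, $\texttt{EDF-dyn}(\mathcal{E}(\mathcal{U}, \texttt{CDI}(\mathcal{T})))$ is a correct schedule. By Lemma~\ref{th:eq-edf-dyn-min-edf-dyn-cdi}, this is exactly $\texttt{EDF-dyn-min}(\mathcal{T})_{\mathcal{U}}$. Since $\mathcal{U}$ was arbitrary, every projection is a correct schedule mapping.

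Before concluding, I need to check that the family $\{\texttt{EDF-dyn-min}(\mathcal{T})_{\mathcal{U}}\}_{\mathcal{U}}$ actually is a valid \emph{tree} schedule, i.e.\ that it satisfies the coincidence condition: for two choice sets $\mathcal{U},\mathcal{U}'$, the two schedule mappings must agree up to the instant $\tau$ of the first differentiating choice node. This is where the real content of the proof lies, and it is the step I expect to be the main obstacle to state carefully. The argument is that EDF-dyn-min's decision at time $t$ depends only on (i) which block each task is currently executing and (ii) the implicit deadlines and start dates attached to those blocks. Before $\tau$, the sequence of blocks executed along each task is identical under $\mathcal{U}$ and $\mathcal{U}'$, since no differentiating choice has yet occurred. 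Moreover, the ``soonest possible implicit deadline'' that EDF-dyn-min uses is precisely the deadline that CDI attaches to the relevant choice nodes, and by construction this is a function of the tree alone, not of the choice made. Hence EDF-dyn-min's scheduling decision at every $t < \tau$ is identical in both runs, giving $S_{\mathcal{U}}(b,t) = S_{\mathcal{U}'}(b,t)$ as required.

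Combining these two observations finishes the proof: each projected mapping is a correct chain schedule, and the collection satisfies the prefix-coincidence condition, so EDF-dyn-min produces a correct tree schedule whenever one exists. The cleanest way to present this will be to first dispatch the prefix-coincidence property as a short observation about the state on which EDF-dyn-min depends, and then apply the chain-level optimality and the lemma projection-by-projection.
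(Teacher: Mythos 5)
Your proposal is correct and follows essentially the same route as the paper's proof: per-choice correctness via Lemma~\ref{th:eq-edf-dyn-min-edf-dyn-cdi} together with the optimality of EDF-dyn on chains, plus a separate check of the prefix-coincidence condition using the fact that the CDI-inherited deadlines (and hence EDF-dyn-min's decisions) do not depend on the choice before the first differentiating choice node. Your version is in fact slightly more careful than the paper's sketch, since you explicitly invoke the CDI-equivalence theorem to establish feasibility of $\mathcal{E}(\mathcal{U},\texttt{CDI}(\mathcal{T}))$, a step the paper leaves implicit.
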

\begin{proof}[Proof sketch]
  Let $\mathcal{T}$ be a feasible set of time-constrained trees and $S =
  \texttt{EDF-dyn-min}(\mathcal{T})$ be the schedule produced by
  EDF-dyn-min.

  For any set of choices $\mathcal{U}$, by
  Lemma~\ref{th:eq-edf-dyn-min-edf-dyn-cdi} and because EDF-dyn is
  optimal, $S_{\mathcal{U}}$ is a correct schedule for
  $\mathcal{E}(\mathcal{U},\mathcal{T})$.

  It remains to prove that for two choices $\mathcal{U}$ and
  $\mathcal{U'}$, the schedules remain the same until the first
  instant of choice. Following
  Lemma~\ref{th:eq-edf-dyn-min-edf-dyn-cdi} and definition of choice
  deadline inheritance, all deadlines are the same until the first
  differing choice is taken; thus the schedules are the same until
  this moment.
\end{proof}

\paragraph{Implementation in OASIS}

The implementation of EDF-dyn-min in OASIS executes only trees with
choice deadline inheritance. By
Lemma~\ref{th:eq-edf-dyn-min-edf-dyn-cdi}, it uses the EDF-dyn
algorithm given in Section~\ref{sec:formal-algorithm}.

To implement choice and choice deadline inheritance, the compiler adds
two separate ``update'' system calls, at the beginning of each
``then'' and ``else'' branches of each choice that changes timing
behavior.

In fact, the ``update'' system call for the branch with the soonest
deadline is redundant and can be removed.

\subsection{Scheduling of time-constrained automata}

The semantics of scheduling a set of time-constrained automata
$\mathcal{A}$ is the semantics of scheduling the corresponding set of
unfolded trees.

\paragraph{Implementation in OASIS}

We use the same algorithm as for time-constrained trees: we apply choice
deadline inheritance to the automaton and use the EDF-dyn algorithm.

By replacing \texttt{after} instructions by after system calls, and
\texttt{before} instructions by ``update deadline'' system calls, the
time-constrained automaton is completely embedded in the code structure.
Execution of this code unfolds the automaton on the fly. Conversion
from relative to absolute labelling is also performed dynamically.

\vspace{-3mm}

\paragraph{Feasibility analysis}

Although we won't present it due to lack of space, it is possible to
conduct a feasibility analysis on time-constrained tasks. This is done
by computing the product of the automata to analyze the blocks
simultaneously executed, and translate this product into a linear
programming (in fact, network flow) problem (variables represent the
amount of a block done on an interval). Although of great complexity
in the worst case, this proved to be very tractable for all the
industrial problems we have considered.

One of the reason why the problem is tractable is the absence of
interaction between communication and scheduling. These interactions
are often hard to take into account in practice.

\section{Conclusion}

The time-constrained task model is a general model able to accurately
describe the temporal behavior of algorithms constrained by time. One
of its main interest is that it can be used for scheduling hard
real-time tasks, with high efficiency since optimal scheduling exists.
We have presented their particular scheduling semantics, and some of
their uses (guaranteed safe interaction, deterministic communication,
and derivation from source code).

We are currently writing a complete formalization of the scheduling
semantics, as well as very detailed proof of the optimality of our
scheduling and feasibility analysis algorithms. This will allow to
provide formal, abstract semantics for the time-constrained automata
and the different communication primitives, for use by future
model-checking tools.

A forthcoming paper will present feasibility analysis in detail, as
well as results on multiprocessor scheduling of our model. Indeed, our
non-blocking model of tasks can be executed with very high
utilization, even on multiprocessor computers, which makes it very
promising for high-performance real-time systems. Even if ``the
future'' needs to be known for optimal scheduling
\cite{dertouzos89multiprocessor}, the time-constrained model only
expresses a reasonable set of possible futures, thus allowing
near-optimal scheduling on multiple processors.

\paragraph{Present and future extensions}

The model presented here is sufficient to study schedulability, but
some extensions make it more practical to build real applications
using a time-constrained model. Among these are communication
primitives, synchronization on multiple clocks (that allow a task to
synchronize with another one that has a different rate), variable
afters (the date is only known to be in some interval)\ldots All these
extensions are implemented in the industrial version of OASIS.

A future extension of particular interest with regard to scheduling is
the extension of the model to allow multithreaded computations. This
is done by allowing certain nodes to ``create'' new automata (a
\texttt{fork()}-like node), and some other to wait for their
completion (a \texttt{wait()}-like node). Another future extension is
support for event-triggered computation in the model.

A last research direction is the study of automatic transformation
from formal specifications (e.g. timed automata \cite{alur94theory})
to our task model.

{\small
\bibliographystyle{ieeetr}
\bibliography{/home/ml210990/doc/papers/bibliographie.bib}

}

\end{document}